\documentclass[conference]{IEEEtran}
\IEEEoverridecommandlockouts
\usepackage{amsfonts}
\usepackage{dsfont} 
\usepackage{setspace}
\usepackage{color}
\usepackage{amssymb}
\usepackage{cite}
\usepackage[cmex10]{amsmath}
\usepackage{amsthm}
\usepackage{algorithm}
\usepackage{algorithmic} % algorithm
\usepackage{array}
\usepackage{mathrsfs}
\usepackage{graphicx}
\usepackage{latexsym}
\usepackage{amscd}
\usepackage{amsfonts}
\usepackage{subfigure}
\usepackage{amsmath,amscd,amssymb,verbatim}
\usepackage{graphics}
\usepackage{amsthm}
\usepackage[T1]{fontenc}
\usepackage[utf8]{inputenc}
\usepackage{authblk}
\usepackage[nocomma]{optidef}
\usepackage{mathtools}
\usepackage{anyfontsize} 
\usepackage{soul}
\usepackage{textcomp}
\usepackage{xcolor}
\usepackage{blindtext}
\usepackage[left=0.673in,right=0.673in,top=0.71in,bottom=1in]{geometry}
\IEEEoverridecommandlockouts

\IEEEpubid{%
	\parbox[b]{\columnwidth}{%
		\footnotesize
		\textcopyright~2026 IEEE. Personal use of this material is permitted.
		Permission from IEEE must be obtained for all other uses, in any current
		or future media, including reprinting/republishing this material for
		advertising or promotional purposes, creating new collective works, for
		resale or redistribution to servers or lists, or reuse of any copyrighted
		component of this work in other works.%
	}%
	\hspace{\columnsep}%
	\makebox[\columnwidth]{}%
}

\newtheorem{lemma}{Lemma}

\newtheorem{corollary}{Corollary}
\usepackage{color}

\begin{document}

\title{AoI-Oriented Globally Optimal Joint Source and Update Scheduling in Fluid Antenna Systems\vspace{-.2cm}}
\author{
% %%\vspace{-.3cm}
 Xiaopeng Yuan$^{\dag}$,  Paul Zheng$^{\dag}$ and Anke Schmeink$^{\dag,\ddag}$ \vspace{0.1cm} \\ 
 $^\dag$Chair of Information Theory and Data Analytics, RWTH Aachen University, Germany\\ Email: $yuan|zheng|schmeink$@inda.rwth-aachen.de\\
    $^{\ddag}$Cluster of Excellence CARE, TU Dresden and RWTH Aachen, Germany\vspace{-0.3cm}
%\thanks{This work was supported by the BMFTR Germany projects 6GEM+ under Grant 16KIS2409K and GEM-X under Grant 16KISS004K.}
\thanks{\vspace{0.5cm}}
}
%\vspace{-0.4cm}}
\maketitle
%\vspace{-0.3cm}    

\begin{abstract}
As a promising technique, fluid antennas enable adaptive radio environment management and interference mitigation through reconfigurable fluid port selections. 
In this work, to explore the benefits of fluid antennas for data freshness enhancement, we consider a fluid-antenna assisted status update system supported by multiple source nodes monitoring the same environmental status.  
We assume a subset of the source nodes are activated to report status updates with different periods. Each user is assigned to one source node and equipped with an fluid antenna to adaptively enhance the channel gain to assigned source node while mitigating interference from other activated source nodes. %we formulate a maximum age of information (AoI) minimization problem, where 
With maximal signal-to-interference-noise ratio (SINR)-based fluid port selection at all users, we formulate an optimization problem to minimize the maximum average age of information (AoI), where source node activation and assignment, i.e., source scheduling, is jointly optimized with the update periods of all activated source nodes. To optimally solve the resulting mixed-integer nonlinear problem, we first consider given source scheduling decision and apply performance achievability analysis. Aided by fixpoint theory, we propose an efficient bisection algorithm for optimal update scheduling. Based on these characterizations, we further propose a filtering algorithm which efficiently eliminates all non-optimal source scheduling decisions. The globally optimal joint solution is then obtained by combining the resulting optimal source scheduling  with its corresponding optimal update scheduling. Numerical results validate the optimality of proposed solution and demonstrate the effectiveness of fluid antennas in enhancing data freshness.
\end{abstract}

\begin{IEEEkeywords}
Age of information (AoI), fluid antenna systems (FAS), joint source and update scheduling, fixpoint theory, globally optimal solution. 
\end{IEEEkeywords}

%\vspace{-0.1cm}    
\section{Introduction}
%\vspace{-0.03cm}

Towards advanced mission-critical networks, data freshness plays a critical role in enabling real-time perception and timely environment awareness~\cite{aoi_1,aoi_2}. 
Especially for mission-critical real-time applications, outdated information may cause erroneous operations and misleading decisions~\cite{aoi_3}, resulting in unpredictable risk and increased costs. 
To deal with this issue, age of information (AoI) has been proposed and widely adopted as a key metric for quantifying data freshness. 
Abundant research has been conducted on minimizing AoI and enhancing data freshness through update scheduling~\cite{aoi_5} and status packet management~\cite{aoi_6}. %, and resource allocation~\cite{aoi_7}. 
In parallel, extensive AoI investigations have been carried out across various scenarios, including mobile edge computing (MEC) networks~\cite{aoi_8}, relaying systems~\cite{aoi_9} and secure communication networks~\cite{aoi_10}. 

Moreover, the high dynamics of wireless networks, together with the coexistence of a wide range of wireless services, induce a complex radio environment with severe interference~\cite{interference}, rendering reliable data refreshing more challenging in wireless status update systems. To tackle this challenge,
fluid antenna systems (FAS) %, as one of the promising flexible antenna technologies, 
have recently been  discovered to exhibit strong capabilities in adaptively reconfiguring the radio environment and thereby mitigating the interference~\cite{fluid_1,fluid_2}. Particularly, in FAS, the position of fluidic element can be flexibly adjusted across multiple preset fluid ports, which enables rapid selection of the most favorable fluid port against deep fading and interference. Leveraging the potential of fluid antennas, a new fluid antenna multiple access (FAMA) scheme has been developed in \cite{fluid_3,fluid_4} towards next-generation multi-access networks, while fluid antenna assisted multi-input multi-output (MIMO) systems have been investigated in \cite{fluid_5}.

Nevertheless, despite the extensive research efforts, the integration of fluid antennas into status update system for AoI minimization has not yet been fully explored. 
A recent study~\cite{fluid_6} investigated fluid-antenna aided AoI minimization in an unmanned aerial vehicle (UAV)-assisted wireless network where the UAV acted as the single status source, while  the interference mitigation capability of fluid antennas was overlooked.
For a practical resilient status update system, more than one status source node can be accessible and activated simultaneously. In such a case, the coexisting source nodes lead to severe mutual interference, which can potentially be largely mitigated through deployed fluid antennas. Meanwhile, when multiple users request status updates simultaneously, source scheduling for each receiving user becomes more challenging due to the integration of fluid antennas. So far, to the best of our knowledge, investigations of fluid antenna for multi-source interfered status update systems, as well as the optimal joint source and update scheduling for AoI minimization, are still missing in the existing literature. 

To fill these gaps, we study a fluid-antenna assisted status update system, where multiple fluid-antenna users require fresh status updates from multiple source nodes monitoring the same status process. Taking into account the interference mitigation capabilities of fluid antennas, we aim to minimize the maximum average AoI among all users via jointly optimizing the source node activation, assignment and update periods for all activated source nodes. To optimally tackle the formulated mixed-integer nonlinear problem, we propose a fixpoint-theorem based algorithm for the optimal update period design, which then enables us to efficiently filter out the optimal decision of source node activation  and assignment. The main contributions of this work are summarized as follows:
\begin{itemize}
    \item In this work, we investigate a multi-source fluid-antenna-assisted status update system. The potential of fluid antennas in mitigating interference and simultaneously minimizing AoI is demonstrated.
    \item To tackle the analytical challenges introduced by fluid antennas and finite blocklength codes, we propose a novel optimization approach, where performance achievability is characterized via fixpoint theory, thereby enabling optimal update period design. 
    \item Based on the %performance achievability 
    characterizations, we propose an efficient algorithm to determine the optimal source node activation and assignment, eventually achieving the globally optimal joint source and update scheduling. The entire design framework %methodology 
    can be extended to a wide range of %numerous 
    problems with high analytical complexity for the optimal solutions. 
\end{itemize}

The remaining sections are organized as follows: In Section II, we state the optimization problem. Section III describes the optimal update period design, based on which the optimal design for source node activation and assignment is proposed in Section IV. Finally, the proposed optimal design is evaluated in Section V and the entire work is concluded in Section VI.

\vspace{-0.1cm}
\section{Problem Statement}
%\vspace{-0.1cm}

\begin{figure}[t]
    \centering
\includegraphics[width=.75\linewidth,trim=0 20 0 10]{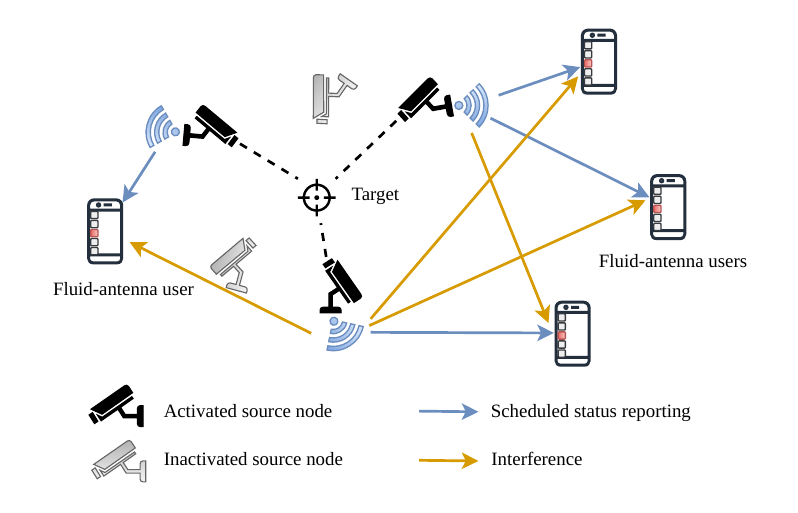}
    \caption{Multi-source status update in fluid antenna systems.}
    \label{fig:scenario}
    \vspace{-0.5cm}
\end{figure}

\subsection{System Model}
In this work, we study a wireless status update system, where multiple users equipped with fluid antennas require fresh information from multiple available source nodes monitoring the same status. We denote by $K$ the total user number and by $S$ the number of source nodes. In practical applications, the multiple source nodes may correspond to multiple monitors observing the same target as illustrated in Fig.~\ref{fig:scenario}, or sensors measuring the same environmental parameter, such as humidity. We assume $U$ out of the $S$ source nodes are activated for status reporting. Let $\mathcal A_S\triangleq\{1,...,S\}$ denote the set of all available source nodes and let $\mathcal A_U \subseteq \mathcal A_S$ denote the set of activated source nodes. Clearly, we have $U\leq S$.

The activated source nodes can explore their spatial diversity to enable more efficient status reporting to multiple users. For update reception, each user is assigned to one activated source node and leverages the flexible reconfiguration capability of  fluid antenna to enhance reception reliability. Specifically, for each user $k\in\mathcal K\triangleq \{1,...,K\}$, we denote by $\alpha(k)\in\mathcal A_U$ the assigned active source node. Note that multiple users may be assigned to the same source node, exploiting the broadcast nature of wireless status updating as implied in Fig.~\ref{fig:scenario}.

Moreover, we assume the fluid antenna equipped at receiving users has $N$ preset fluid ports evenly distributed along a line of length $\lambda_c W$, where $\lambda_c$ is the carrier wavelength. Let $h_{s,k,n}$ denote the channel gain from source node $s$ to $n$-th fluid port at user $k$, where $s\!\in\!\mathcal A_S$ and $n\in\mathcal N\triangleq\{1,...,N\}$. Since the closely placed fluid ports experience correlated channels~\cite{fluid_3}, we consider a rich scattering environment and model the cross-correlation between channel gains  $h_{s,k,n_1}$ and $h_{s,k,n_2}$ as\vspace{-0.2cm}
\begin{equation}\vspace{-0.1cm}
    \phi_{s,k}(n_1-n_2)=\frac{\delta_{s,k}^2}{2}J_0(\frac{2\pi(n_1-n_2)}{N-1}W),~n_1,n_2\in\mathcal N,
\end{equation}
where $\delta_{s,k}^2$ represents the variance of channel from source node $s$ to arbitrary fluid port of user $k$, and $J_0(\cdot)$ is the zero-order Bessel function of the first kind. 
%In this work, we assume different activated source nodes perform status reporting independently, as tight synchronization among multiple distributed source nodes is difficult and costly to achieve in practice. Therefore, for each user $k$, the activated but non-selected source nodes act as interference during the status reception. 
Denoting by $P$ the transmit power of source nodes, the signal-to-interference-plus-noise ratio (SINR) at the $n$-th port of user $k$ for the signal transmitted from its assigned source node $\alpha(k)$ can be expressed as\vspace{-0.1cm}
\begin{equation}\vspace{-0.1cm}
    \gamma_{\alpha(k),k,n}(\mathcal A_U,\mathbf h)=\frac{Ph_{\alpha(k),k,n}}{\sum\nolimits_{j\in\mathcal A_U,j\neq \alpha(k)}Ph_{j,k,n}+\sigma^2},%\label{eq:sinr}
\end{equation}
where $\mathbf h$ denotes the vector containing all channel gains $h_{s,k,n}$ and $\sigma^2$ is the noise power. %additive white Gaussian noise (AWGN). 
By further exploiting the reconfiguration capability of fluid antennas at users, the received SINR at each user $k$ can be enhanced by selecting the highest SINR among all fluid ports~\cite{fluid_3}, i.e.,\vspace{-0.1cm}
\begin{equation}\vspace{-0.1cm}
    \hat\gamma_{\alpha(k),k}(\mathcal A_U,\mathbf h)=\max_{n\in\mathcal N}\Big\{\frac{Ph_{\alpha(k),k,n}}{\sum\limits_{j\in\mathcal A_U,j\neq \alpha(k)}Ph_{j,k,n}+\sigma^2}\Big\}.\label{eq:sinr}
\end{equation}

Note that in this work, we assume different activated source nodes perform status reporting independently, as tight synchronization among multiple distributed source nodes is difficult to achieve in practice. Therefore, for each user $k$, the activated but non-selected source nodes act as interference during the status reception, as indicated in \eqref{eq:sinr} and also in Fig.~\ref{fig:scenario}. 
To achieve fresher and more reliable status monitoring, careful design of source node activation $\mathcal A_U$ and source assignment $\alpha(\cdot)$ is thus essential. 
In addition, the adoption of fluid antennas enables adaptive SINR enhancement through flexible fluid port selection as shown in \eqref{eq:sinr}, but also introduces additional analytical complexity in the system design. %corresponding system analysis and optimization. 

\subsection{AoI Characterization}
\vspace{-0.1cm}
Next, to well represent the data freshness, we adopt AoI %characterization. %Since maintaining data freshness is critical in status update systems, the concept of AoI has been widely adopted 
as the key performance metric. 
At any time point $t$, the instantaneous AoI is defined as \vspace{-0.1cm}
\begin{equation}\vspace{-0.1cm}
    \Delta (t)=t-u(t),
\end{equation}
where $u(t)$ represents the time for the most recently received status update being generated. The time-average AoI can be subsequently derived as \vspace{-0.15cm}
\begin{equation}\vspace{-0.1cm}
    \mathbb E[\Delta]=\limsup_{\tau\to\infty}\frac{1}{\tau}\int_0^\tau \Delta(t)dt.
\end{equation}

In our considered status update system, we assume each activated source node $\alpha(k)$ periodically reports the status update with a period of $T_{\alpha(k)}$, resulting in $m_{\alpha(k)}=\frac{T_{\alpha(k)}}{T_s}$ available symbols for wireless transmissions, where $T_s$ denotes the symbol duration. 
According to \cite{fbl_0}, over a finite blocklength of $m$, i.e., $m$ available symbols, the achievable decoding error probability of a wireless transmission is determined by the SINR $\gamma$ and data packet size $D$ as follows\vspace{-0.1cm}
\begin{equation}\vspace{-0.1cm}
    \varepsilon(m,\gamma)=Q\Big(\sqrt{\frac{m}{V(\gamma)}}(\log_2(1+\gamma)-\frac{D}{m})\ln2\Big),\label{eq:fbl}
\end{equation}
where $V(\gamma)=1-\frac{1}{(1+\gamma)^2}$ represents the channel dispersion and $Q(\cdot)$ denotes the Gaussian $Q$-function. Accordingly, with packet size $D$ for each status update, the transmission error probability for the status update 
from activated source node $\alpha(k)$ to user $k$ with channel gain $\mathbf h$ can be expressed as $\varepsilon(m_{\alpha(k)},\hat\gamma_{\alpha(k),k}(\mathcal A_U,\mathbf h))$.
%\begin{equation}
 %   \varepsilon_k=\varepsilon(m_{\alpha(k)},\hat\gamma_{\alpha(k),k}(\mathcal A_U,\mathbf h)).
%\end{equation}
With time-varying channel conditions, the average transmission error probability for each status update to user $k$ can be derived as\vspace{-0.1cm}
\begin{equation}\vspace{-0.1cm}
    \mu_k(\mathcal A_U,\alpha(\cdot),m_{\alpha(k)})=\int_{\mathbf h}\varepsilon(m_{\alpha(k)},\hat\gamma_{\alpha(k),k}(\mathcal A_U,\mathbf h))\psi(\mathbf h)\text{d}\mathbf h,\label{eq:error}
\end{equation}
where $\psi(\mathbf h)$ is the probability density function (PDF) of channel gain vector $\mathbf h$. 

Then, %Next, we characterize the average AoI based on the derived average transmission error probability. L
let $Y_k$ denote the time difference between two consecutive successful status reception at user $k$, which follows a random distribution. Specifically, the probability for event $Y_k=zT_{\alpha(k)}=zm_{\alpha(k)}T_s$ with $z\in\mathbb N_+$ can be obtained as %\vspace{-0.1cm}
%\begin{equation}\vspace{-0.1cm}
    $\text{Pr}(Y_k=zT_{\alpha(k)})=\mu_k^{z-1}(1-\mu_k)$.
%\end{equation}
According to \cite{aoi_formula}, the average AoI for user $k$ can be characterized as\vspace{-0.1cm}
\begin{align}
    \mathbb E[\Delta_k]&=\frac{\mathbb E[Y_k^2]}{2\mathbb E[Y_k]}+T_{\alpha(k)}\nonumber \\
    &=\frac{1}{2}m_{\alpha(k)}T_s+\frac{m_{\alpha(k)}T_s}{1-\mu_k(\mathcal A_U,\alpha(\cdot),m_{\alpha(k)})}.
\end{align}
%Clearly, the average AoI for each user is jointly determined by the source node activation $\mathcal A_U$, source selection $\alpha(\cdot)$ and update period $m_{\alpha(k)}T_s$ of activated source nodes. 

\subsection{Problem Formulation}

In this work, taking user fairness into account, we aim to enhance data freshness for all users through the joint optimization of source node activation $\mathcal A_U$, source assignment $\alpha(\cdot)$ and the update blocklength $m_{\alpha(k)}$ of activated source nodes, where the maximal average AoI among all users is supposed to be minimized. The optimization problem can be formulated as\vspace{-.1cm}
\begin{align}
    \text{(P1)}\!\!\!\min_{\mathcal A_U,\alpha(\cdot),m_{\alpha(k)}} \!\!\!\!&~~\max_{k\in\mathcal K}\big\{\frac{1}{2}m_{\alpha(k)}T_s\!+\!\frac{m_{\alpha(k)}T_s}{\!1\!-\!\mu_k(\mathcal A_U,\alpha(\cdot),m_{\alpha(k)})}\big\}\\
    \text{s.t.}~~~~&~~ \mathcal A_U\subseteq \mathcal A_S, \\
    &~~\alpha(k)\in\mathcal A_U,\\
    &~~m_{\alpha(k)}>0,~\forall k\in\mathcal K.
\end{align}
Due to the structural impact of source node activation $\mathcal A_U$ and source assignment $\alpha(\cdot)$, as well as the complicated expression of average error probability $\mu_k(\mathcal A_U,\alpha(\cdot),m_{\alpha(k)})$,  the resulting joint optimization %for enhancing data freshness 
is highly challenging to solve optimally. In particular, the SINR expression \eqref{eq:sinr} which depends highly on optimized fluid port selection, and the integral involving $Q$-function in the average error probability \eqref{eq:error}, render the problem nonconvex and analytically intractable. Therefore, more adaptive optimization approach is highly demanded for optimally and effectively addressing problem~(P1).

\section{Fixpoint-Based Optimal Blocklength Design}

\subsection{Subproblem Formulation}
To optimally address the joint optimization problem (P1), we first characterize the optimal blocklength design with given source node activation and assignment. For any $\mathcal A_U$ and $\alpha(\cdot)$, the subproblem of blocklength design can be formulated as\vspace{-0.1cm}
\begin{align}
    \!\text{(P2)}\min_{m_{\alpha(k)}} \!\!\!\!&~~\max_{k\in\mathcal K}\big\{\frac{1}{2}m_{\alpha(k)}T_s\!+\!\frac{m_{\alpha(k)}T_s}{\!1\!-\!\mu_k(\mathcal A_U,\alpha(\cdot),m_{\alpha(k)})}\big\}\\
    \text{s.t.}&~~m_{\alpha(k)}>0,~\forall k\in\mathcal K,
\end{align}

\vspace{-0.2cm}
\noindent{}which remains nonconvex due to the complex objective function and cannot be directly and optimally solved using conventional optimization techniques.

\subsection{Fixpoint-Based Inference of Performance Achievability}

Instead of directly optimizing the blocklength, we focus on performance achievability with respect to an objective threshold, leading to the following Lemma~\ref{le:threshold}.

\begin{lemma}\label{le:threshold}
    Given an objective threshold $\lambda>0$, there exists $m_{\alpha(k)}>0$ such that \vspace{-0.1cm}
    \begin{equation}\vspace{-0.1cm}
        \frac{1}{2}m_{\alpha(k)}T_s\!+\!\frac{m_{\alpha(k)}T_s}{\!1\!-\!\mu_k(\mathcal A_U,\alpha(\cdot),m_{\alpha(k)})}=\lambda,~\forall k\in\mathcal K,\label{eq:le_1}
    \end{equation}
    if and only if the threshold $\lambda$ is achievable for problem (P2), i.e., 
    the optimal objective $\eta^*$ for problem (P2) is below or equal to the threshold $\lambda$, resulting in  $\eta^*\leq \lambda$. 
%    Otherwise, the threshold $\lambda$ is unachievable for problem (P2), such that $\eta^*>\lambda$. 
\end{lemma}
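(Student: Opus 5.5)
Let $f_k(m)\triangleq\frac{1}{2}mT_s+\frac{mT_s}{1-\mu_k(\mathcal A_U,\alpha(\cdot),m)}$ denote the AoI of user $k$ as a function of the blocklength $m$ of its assigned source. Since all users assigned to the same source share one blocklength, condition~\eqref{eq:le_1} is read per activated source $s$: for the users $k$ with $\alpha(k)=s$, there should be a common $m_s$ with $f_k(m_s)\le\lambda$, and at least one of them attains equality. The strategy is to study each $f_k$ as a continuous function of $m>0$ and use an intermediate-value argument. Equivalently, we look for a fixpoint of $m\mapsto(\lambda-\frac{mT_s}{1-\mu_k(m)})\frac{2}{T_s}$, in line with the section's fixpoint viewpoint.

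\textbf{Step 1: limits and continuity of $f_k$.} The function $\varepsilon(m,\gamma)$ in~\eqref{eq:fbl} is continuous in $m$ and takes values in $(0,1)$. Dominated convergence then gives that $\mu_k(m)$ is continuous and lies in $(0,1)$, so $f_k$ is continuous on $(0,\infty)$.
\begin{itemize}
\item As $m\to\infty$, $f_k(m)\ge\frac{1}{2}mT_s\to\infty$.
\item As $m\to0^+$, the argument of $Q$ behaves like $-\frac{D\ln2}{\sqrt{mV}}\to-\infty$, so $\varepsilon\to1$ and $\mu_k\to1$.
\end{itemize}
For the small-$m$ regime, use the Gaussian tail: $1-\varepsilon\approx e^{-cD^2/m}$. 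Hence $\frac{m}{1-\mu_k(m)}\to\infty$ as $m\to0^+$. This requires a uniform lower bound on the tail integrated over $\mathbf h$; Fatou's lemma or a bound $V(\gamma)\le1$ should suffice. Consequently $f_k$ is bounded below by some $\eta_k^{\min}=\min_m f_k(m)>0$, which is attained.

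\textbf{Step 2, sufficiency (``if $\eta^*\le\lambda$ then a solution of~\eqref{eq:le_1} exists'').} Take an optimal $\{m^*_s\}$ of (P2), which exists by the coercivity from Step 1. Then $f_k(m^*_{\alpha(k)})\le\eta^*\le\lambda$ for every $k$. Since $f_k\to\infty$ as $m\to\infty$, increase each $m_s$ from $m^*_s$ until the first time $\max_{k:\alpha(k)=s}f_k(m_s)=\lambda$. By continuity this crossing point exists, which yields~\eqref{eq:le_1}.

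\textbf{Step 3, necessity.} Conversely, if such $m$ exist, then $\max_k f_k\le\lambda$ at a feasible point of (P2), so $\eta^*\le\lambda$ immediately.

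\textbf{Main obstacle.} The hard part is the $m\to0^+$ behaviour of $\mu_k$ after averaging over the fading distribution $\psi(\mathbf h)$ with the max-SINR port selection. Specifically, one must show $1-\mu_k(m)=o(m)$, so that $f_k$ blows up near zero and the minimum is attained in the interior.

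I would bound this as follows:
\[
1-\varepsilon\le Q\Big(\tfrac{D\ln2}{\sqrt m}-\sqrt{m}\log_2(1+\gamma)\ln 2\Big).
\]
Then I would split the integral over $\gamma\le\Gamma$ and $\gamma>\Gamma$. On $\gamma\le\Gamma$ the Gaussian tail makes the contribution $o(m)$. On $\gamma>\Gamma$ the contribution is at most $\Pr(\hat\gamma>\Gamma)$, which is made small by choosing $\Gamma$ large. Letting $\Gamma\to\infty$ slowly as $m\to0$ completes the argument.
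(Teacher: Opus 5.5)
Your argument is correct, but it runs the intermediate-value step in the opposite direction from the paper.

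\textbf{Comparison of routes.} For the nontrivial direction the paper goes \emph{downward}. It asserts that each user's AoI tends to $\infty$ as $m_{\alpha(k)}\to0^+$, inferring this only from $\mu_k\to1$. Since the AoI ``can reach'' $\eta^*\le\lambda$, continuity then gives a blocklength below the optimizer with AoI exactly $\lambda$. You go \emph{upward} from an optimal point, using only the trivial growth $f_k(m)\ge\frac12 mT_s$. As a result, the behaviour at $m\to0^+$ is needed only to guarantee that $\eta^*$ is attained. That matters only when $\eta^*=\lambda$; for $\eta^*<\lambda$ you may start from any feasible point with objective below $\lambda$. This isolates the delicate limit. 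You also correctly flag what the paper skips: $\mu_k\to1$ alone leaves $m\cdot\frac{1}{1-\mu_k}$ indeterminate, and one actually needs $1-\mu_k(m)=o(m)$.

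\textbf{Per-source reading.} Your per-source reading of \eqref{eq:le_1} is a correction, not a weakening. If two users with different error profiles share a source, there is in general no common $m_s$ giving equality for both, so the literal statement fails. What is true is that the maximum of $f_k$ over the users of $s$ equals $\lambda$. This is exactly the fixpoint condition once the component of $\mathbf f_\lambda$ for source $s$ is defined as the minimum over its users. The paper's proof treats each $k$ independently and misses this.

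\textbf{A step that needs repair.} Two problems occur in your $m\to0^+$ sketch.
First, Fatou's lemma points the wrong way: you need an \emph{upper} bound on $\limsup_{m\to0^+}(1-\mu_k(m))/m$.
Second, in your split the tail term must be $o(m)$, not merely small, so $\Gamma$ must grow fast rather than slowly. Take $\log_2(1+\Gamma)=\frac{D}{2m}$.
\begin{itemize}
\item On $\{\hat\gamma_{\alpha(k),k}\le\Gamma\}$ your bound is valid, because $V\le1$ and the argument is positive there. It gives $1-\varepsilon\le Q\big(\frac{D\ln2}{2\sqrt m}\big)\le\frac12e^{-(D\ln2)^2/(8m)}=o(m)$.
\item The tail is $\Pr\big(\log_2(1+\hat\gamma_{\alpha(k),k})>\frac{D}{2m}\big)$. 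This is $o(m)$ because $\hat\gamma_{\alpha(k),k}\le\max_n Ph_{\alpha(k),k,n}/\sigma^2$ has exponential tails under the Rayleigh model. In general, $\mathbb E[\log(1+\hat\gamma_{\alpha(k),k})]<\infty$ suffices.
\end{itemize}
Some such tail assumption on $\psi$ is genuinely needed. If $\Pr(\log_2(1+\hat\gamma_{\alpha(k),k})>t)$ decayed only like $1/t$, then $1-\mu_k(m)$ would be at least of order $m$, and $f_k$ would stay bounded as $m\to0^+$.
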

\vspace{-0.3cm}
\begin{proof}
    For given $\lambda>0$, if there exists $m_{\alpha(k)}>0$ fulfilling~\eqref{eq:le_1}, we actually find a feasible solution $\{m_{\alpha(k)}\}$ for (P2) with objective value of $\lambda$, such that $\eta^*\leq \lambda$.
    
    Now, we assume $\eta^*\leq \lambda$ and construct positive $m_{\alpha(k)}$ fulfilling~\eqref{eq:le_1}. According to \eqref{eq:fbl} and \eqref{eq:error}, with a relatively small blocklength $m_{\alpha(k)}$, the average error probability $\mu_k(\mathcal A_U,\alpha(\cdot),m_{\alpha(k)})$ for user $k$ can easily reach~$1$. As a result, the average AoI for user $k$, i.e., the left-hand-side term of \eqref{eq:le_1}, approaches infinity as blocklength $m_{\alpha(k)}$ approaches $0$. Since the average AoI for user $k$ is continuous in $m_{\alpha(k)}>0$ and can reach the optimal objective level $\eta^*$, for $\eta^*\leq \lambda$, there must exist a positive blocklength $m_{\alpha(k)}$ such that the average AoI is equal to $\lambda$, i.e., \eqref{eq:le_1} holds, which proves the statement.
\end{proof}
\vspace{-0.3cm}
The above Lemma~\ref{le:threshold} provides a criterion for evaluating the performance achievability of problem (P2) with respect to $\lambda$. Next, we propose a fixpoint based inference strategy to determine the solution existence of \eqref{eq:le_1} for any given $\lambda>0$. 

We define an $|\mathcal A_U|$-dimension space $\Omega\triangleq [0,\infty]^{|\mathcal A_U|}$. Any blocklength vector $\mathbf m$ including all blocklength variables $m_{\alpha(k)}$ belongs to the space $\Omega$. We further introduce a partial order ``$\preceq$'' over space $\Omega$ as $\mathbf m\preceq \hat{\mathbf m}$ for $\mathbf m,\hat{\mathbf m}\in \Omega$ if and only if $m_{\alpha(k)}\leq \hat m_{\alpha(k)}$, $\forall \alpha(k)\in\mathcal A_U$. According to~\cite{fluid_4,routing_1}, $(\Omega,\preceq)$ can be easily validated as a complete lattice. Subsequently, based on \eqref{eq:le_1}, we define a mapping as follows\vspace{-0.1cm}
\begin{equation}\vspace{-0.1cm}
    \mathbf f_\lambda:\Omega\to\Omega,~m_{\alpha(k)}\mapsto \frac{\lambda}{T_s}\frac{1}{\frac{1}{2}+\frac{1}{1-\mu_k(\mathcal A_U,\alpha(\cdot),m_{\alpha(k)})}},
    \label{eq: fixpoint-mapping}
\end{equation}
which can be easily shown to be monotonic, i.e., $f_\lambda(\mathbf m)\preceq f_\lambda(\hat{\mathbf m})$ for any $\mathbf m\preceq\hat{\mathbf m}$, due to the monotonic decreasing property of $\mu_k(\mathcal A_U,\alpha(\cdot),m_{\alpha(k)})$ with respect to $m_{\alpha(k)}$. Clearly, the blocklength vector $\mathbf m$ satisfying \eqref{eq:le_1} is a fixpoint of mapping $\mathbf f_\lambda$, i.e., $\mathbf f_\lambda(\mathbf m)=\mathbf m$.

According to the Knaster-Tarski Fixpoint Theorem and top iteration in \cite{fixpoint}, we have the following Lemma~\ref{le:fixpoint}.
\vspace{-0.1cm}
\begin{lemma}\label{le:fixpoint}
    For complete lattice $(\Omega,\preceq)$ and monotonic mapping $\mathbf f_\lambda$, the set of fixpoints with respect to $\mathbf f_\lambda$ is a complete lattice under partial order ``$\preceq$'', and there exists a unique greatest fixpoint of $\mathbf f_\lambda$. 

    By starting from the maximal element $\top\triangleq[\infty,...,\infty]^T$ in $\Omega$ and applying mapping $\mathbf f_\lambda$ iteratively, the mapped element converges to the greatest fixpoint of $\mathbf f_\lambda$. 
\end{lemma}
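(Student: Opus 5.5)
The plan is to derive the first part from the Knaster--Tarski theorem and the second part from a Kleene-type top iteration. Continuity from above is not automatic, so it needs its own argument.

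\emph{Part one (structure of the fixpoint set).} First I check the hypotheses. The space $\Omega=[0,\infty]^{|\mathcal A_U|}$ is a finite product of the complete chains $[0,\infty]$, so every subset has a componentwise supremum and infimum. Monotonicity of $\mathbf f_\lambda$ comes from the fact that $\mu_k(\mathcal A_U,\alpha(\cdot),m)$ is nonincreasing in $m$. This holds because $\varepsilon(m,\gamma)$ in \eqref{eq:fbl} is nonincreasing in $m$ whenever $\log_2(1+\gamma)$ is large relative to $D/m$; more robustly, I can take this property as the premise already stated before \eqref{eq: fixpoint-mapping}. Hence $\frac{1}{1/2+1/(1-\mu_k)}$ is nondecreasing in $m_{\alpha(k)}$. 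I will extend $\mathbf f_\lambda$ to the boundary points $0$ and $\infty$ by limits, using $\mu_k\to 1$ as $m\to0$ and $\mu_k\to0$ as $m\to\infty$, so that $\mathbf f_\lambda$ is well defined on all of $\Omega$. Knaster--Tarski then gives directly that the fixpoint set is a complete lattice, and that its greatest element is $\sup\{\mathbf m:\mathbf m\preceq\mathbf f_\lambda(\mathbf m)\}$, which is unique as a greatest element.

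\emph{Part two (top iteration).} Set $\mathbf m^{(0)}=\top$ and $\mathbf m^{(i+1)}=\mathbf f_\lambda(\mathbf m^{(i)})$. Since $\mathbf m^{(1)}\preceq\top$, induction with monotonicity shows that the sequence is nonincreasing. By the same induction, $\mathbf m^*\preceq\mathbf m^{(i)}$ for every $i$, where $\mathbf m^*$ is the greatest fixpoint. The sequence therefore converges componentwise to a limit $\mathbf m^\infty\succeq \mathbf m^*$.

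It remains to show that $\mathbf m^\infty$ is a fixpoint, since maximality of $\mathbf m^*$ then forces $\mathbf m^\infty=\mathbf m^*$. This is the main obstacle, because in general lattices the top iteration needs $\omega$-continuity from above (or a transfinite iteration). I will argue it here from the continuity of $\mu_k$ in $m$. For fixed $\gamma$, $\varepsilon$ is continuous in $m$, and it is bounded by $1$. Dominated convergence then gives continuity of $\mu_k$ in $m$ on $(0,\infty)$, and the extension handles the endpoints. Consequently $\mathbf f_\lambda$ is componentwise continuous, and in particular it preserves limits of decreasing sequences. Passing to the limit in $\mathbf m^{(i+1)}=\mathbf f_\lambda(\mathbf m^{(i)})$ yields $\mathbf f_\lambda(\mathbf m^\infty)=\mathbf m^\infty$.

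One subtlety needs a separate check. Although $\top$ has infinite components, $\mathbf f_\lambda(\top)$ is finite, namely $\frac{2\lambda}{3T_s}$ in each entry, because $\mu_k\to0$. So after one step the iterates lie in a bounded box and the continuity argument applies there. If continuity at $m=0$ is doubtful, I will note that the limit $0$ is itself a fixpoint under the extension, since $\mu_k=1$ gives $f_\lambda=0$. The conclusion therefore holds in either case, possibly with the trivial fixpoint $\mathbf m^\infty=\mathbf 0$, which signals unachievability in view of Lemma~\ref{le:threshold}.
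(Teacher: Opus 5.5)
Your proof is correct and takes the same route as the paper, which gives no argument beyond invoking the Knaster--Tarski theorem and the top-iteration result of its cited reference; your dominated-convergence argument, showing that each component $m\mapsto\frac{\lambda}{T_s}\frac{2(1-\mu_k)}{3-\mu_k}$ is continuous on $[0,\infty]$, supplies the continuity from above that this citation implicitly requires, since monotonicity alone only yields $\mathbf f_\lambda(\mathbf m^\infty)\preceq\mathbf m^\infty$ for the limit of the top iteration. As a small correction, the monotonicity of $\varepsilon$ in $m$ holds unconditionally, because the $Q$-argument equals $\frac{\ln 2}{\sqrt{V(\gamma)}}(\sqrt{m}\log_2(1+\gamma)-D/\sqrt{m})$, which is increasing in $m>0$ for every $\gamma>0$.
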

\vspace{-0.1cm}
Based on Lemma~\ref{le:fixpoint}, we can obtain the greatest fixpoint of mapping $\mathbf f_\lambda$ for any $\lambda>0$ and discuss on two cases:
\begin{itemize}
    \item If the greatest fixpoint has any $m_{\alpha(k)}=0$, we can conclude that positive $\mathbf m$ satisfying \eqref{eq:le_1} does not exist, such that $\eta^*>\lambda$ according to Lemma~\ref{le:threshold}. Otherwise, we will construct a fixpoint larger than the greatest fixpoint.
    \item If the greatest fixpoint has all elements being positive, this fixpoint will be a feasible solution for \eqref{eq:le_1}, such that $\eta^*\leq \lambda$ according to Lemma~\ref{le:threshold}.
\end{itemize}

The above discussions enable a fixpoint-based inference on the performance achievability of problem (P2) with respect to any threshold $\lambda>0$, which motivates us to develop an efficient algorithm %in the following 
to efficiently determine the optimal objective $\eta^*$ for (P2), as well as the optimal blocklength. %to optimally address problem (P2).

\subsection{Optimal Blocklength Design}

At first, we initialize a performance interval $[\lambda_\text{min},\lambda_\text{max}]$ and define $\lambda=\frac{\lambda_\text{min}+\lambda_\text{max}}{2}$. Based on the threshold $\lambda$, we construct the mapping $\mathbf f_\lambda$ and obtain the greatest fixpoint based on Lemma~\ref{le:fixpoint}. If the greatest fixpoint contains positive elements only, the threshold $\lambda$ is then achievable for (P2) and acts as the upper bound for the optimal objective value $\eta^*$. We shorten the performance interval via updating $\lambda_\text{max}=\lambda$ and redefine $\lambda=\frac{\lambda_\text{min}+\lambda_\text{max}}{2}$ for further achievability inference. If the greatest fixpoint contains zero-value element, we then update $\lambda_\text{min}=\lambda$ and $\lambda=\frac{\lambda_\text{min}+\lambda_\text{max}}{2}$ for the next iteration. Via repeatedly shortening the performance interval, we eventually obtain the optimal objective value $\eta^*$ for problem (P2), while the correspondingly obtained greatest fixpoint with $\lambda=\eta^*$ is thus the optimal blocklength design. 

\section{Optimal Source Node Activation and Assignment}

So far, we have achieved optimal blocklength design for any given source node activation $\mathcal A_U$  and assignment $\alpha(\cdot)$. Although the optimal source node activation $\mathcal A_U$  and assignment  $\alpha(\cdot)$ can be determined by exhaustively  evaluating the optimal blocklength design over all possible decisions, this approach introduces significant computational complexity. Towards globally optimal and efficient source node activation and assignment, we propose an efficient algorithm fully leveraging the above conducted characterizations. % in the previous section. 

\subsection{Achievability Inference for Multiple Decisions}

For any threshold $\lambda>0$, the greatest fixpoint of $\mathbf f_\lambda$ can be readily computed, from which the performance achievability can be inferred. We denote by $\Pi$ the set of decision candidates for source node activation and assignment. Given the large number of candidate decisions, %for source node activation and selection, 
we can perform one-shot fixpoint inspections on all decision candidates under the same threshold $\lambda$, thereby obtaining explicit performance achievability for the original problem (P1). We have the following Corollary~\ref{le:filter} stating the criterion for the achievability inference.

\begin{corollary}\label{le:filter}
    Given a set of decision candidates $\Pi$, if a threshold $\lambda>0$ is achievable for  any candidate in $\Pi$, the threshold $\lambda$ is then no smaller than the optimal objective of problem (P1), and all candidates for which $\lambda$ is unachievable are non-optimal.

    Conversely, if a threshold $\lambda>0$ is unachievable for all candidates in $\Pi$, the threshold $\lambda$ is then strictly below the optimal objective of (P1).
\end{corollary}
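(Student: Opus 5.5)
The plan is to reduce (P1) to a minimum over the candidate decisions of the subproblem (P2), and then apply Lemma~\ref{le:threshold} to each candidate separately. For each decision $\pi=(\mathcal A_U,\alpha(\cdot))\in\Pi$, let $\eta^*(\pi)$ be the optimal objective of (P2) under $\pi$. Assume, as the corollary implicitly does, that $\Pi$ contains every decision that can be optimal. Then the optimal objective of (P1) is
\begin{equation*}
\eta^*_{\mathrm{P1}}=\min_{\pi\in\Pi}\eta^*(\pi).
\end{equation*}
This holds because (P1) decouples: we first fix $\mathcal A_U$ and $\alpha(\cdot)$, and then optimize the blocklengths, which is exactly (P2). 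The minimum over $\Pi$ is attained because $\Pi$ is finite: there are only finitely many subsets of $\mathcal A_S$ and finitely many assignment maps.

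\textbf{First statement.} Suppose $\lambda$ is achievable for some $\pi_0\in\Pi$. By Lemma~\ref{le:threshold}, this is equivalent to $\eta^*(\pi_0)\le\lambda$. Hence $\eta^*_{\mathrm{P1}}\le\eta^*(\pi_0)\le\lambda$, so $\lambda$ is no smaller than the optimal objective of (P1).

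Now take any candidate $\pi$ for which $\lambda$ is unachievable. Lemma~\ref{le:threshold} gives $\eta^*(\pi)>\lambda\ge\eta^*_{\mathrm{P1}}$. So even with its own optimal blocklength design, $\pi$ attains an objective strictly larger than the optimum of (P1), which means $\pi$ is non-optimal.

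\textbf{Converse.} Suppose $\lambda$ is unachievable for every $\pi\in\Pi$. Lemma~\ref{le:threshold} then gives $\eta^*(\pi)>\lambda$ for all $\pi$. Taking the minimum over the finite set $\Pi$ preserves the strict inequality, so $\eta^*_{\mathrm{P1}}>\lambda$.

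\textbf{Main obstacle.} The delicate point is the first step, the decomposition of (P1) and the attainment of the minimum. The corollary reasons with optimal objective values, so I need $\eta^*(\pi)$ to be attained, or at least to behave as an infimum, for Lemma~\ref{le:threshold} to apply with non-strict and strict inequalities as written. I would rely on Lemma~\ref{le:threshold} as stated, together with the finiteness of $\Pi$.

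The converse also requires $\Pi$ to contain an optimal decision of (P1). If $\Pi$ is a filtered subset of all decisions, I would argue that the earlier filtering steps only removed candidates already shown to be non-optimal by the first part. Then the minimum over $\Pi$ still equals $\eta^*_{\mathrm{P1}}$, and both statements follow.
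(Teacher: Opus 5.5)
Your proof is correct and matches the paper's intended argument. The paper gives no separate proof of Corollary~\ref{le:filter}; it treats the result as a direct consequence of Lemma~\ref{le:threshold} applied candidate-wise, with (P1) written as a minimum over source-scheduling decisions of the optimal value of (P2). You are also right that the converse needs $\Pi$ to contain an optimal decision, a hypothesis the statement omits, and your justification is the right one: Algorithm~\ref{al:optimal} starts from the full decision set and removes only candidates certified non-optimal by the first part.
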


\subsection{Optimal Decision Filtering}

Based on Corollary~\ref{le:filter}, we develop an efficient filtering strategy to obtain the optimal decision of source node activation and assignment. 
We first initialize the decision candidate set $\Pi$, a similar performance interval $[\lambda_\text{min},\lambda_\text{max}]$, and define $\lambda=\frac{\lambda_\text{min}+\lambda_\text{max}}{2}$. In each filtering round, we inspect the greatest fixpoints for all candidates in $\Pi$ based on threshold $\lambda$. If $\lambda$ is achievable for any candidate, we delete from $\Pi$ all non-optimal candidates that lead to zero-value element in their greatest fixpoints, and update $\lambda_\text{max}=\lambda$. In this case, the size of the candidate set $\Pi$ is reduced, thereby significantly lowering the computational complexity. 
If $\lambda$ is not achievable for all candidates in $\Pi$, we keep the set $\Pi$ and update the performance lower bound as $\lambda_\text{min}=\lambda$.

Via repeating the filtering rounds, the size of set $\Pi$ is continually reduced and eventually only one decision candidate is kept in the set, which is then the globally optimal decision for source node activation and assignment. Note that as non-optimal candidates are continually removed from set $\Pi$, fewer and fewer candidates need to be inspected in subsequent filtering rounds, leading to a significant reduction in computational complexity. 
Finally, based on the optimal decision candidate and updated performance interval, 
we can obtain the corresponding optimal blocklength design via applying the bisection algorithm in the previous section. The globally optimal joint solution for original problem (P1) is eventually obtained. The algorithm flow is summarized in Algorithm~\ref{al:optimal}. %, minimizing the maximal average AoI among all users. 

\begin{algorithm}[!t]%\small
	{\small
	\algsetup{linenosize=\large}
	%\vspace{.05in}
	\caption{\bf{--- Efficient Optimal Decision Filtering}}
	\begin{algorithmic}
		%\STATE \noindent{\bf{$\!\!\!\!\!\!$Initialization}} \\
		\STATE \!\!\!\!\!\noindent{\bf{a)}}~ Initialize decision candidate set $\Pi$ and interval $[\lambda_\text{min},\lambda_\text{max}]$.
		%\STATE \noindent{ \bf{$\!\!\!\!\!\!\!\!$Loop}} \\
		\STATE \!\!\!\!\!\noindent{\bf{b)}}~ Define $\lambda=\frac{\lambda_\text{min}+\lambda_\text{max}}{2}$.
		\STATE \!\!\!\!\!\noindent{\bf{c)}}~ Inspect greatest fixpoints for all candidates in $\Pi$ based on $\lambda$.
		\STATE \!\!\!\!\!\noindent{\bf{d)}}~ {\bf If} there exists all-positive greatest fixpoint,	
		\STATE \!\!\!\!\!~~~~~~~ Delete from $\Pi$ candidates with zero values in their fixpoints. 
		\STATE \!\!\!\!\!~~~~~~~ $\lambda_\text{max}=\lambda$. %, go back to {\bf a)}.
		\STATE \!\!\!\!\!~~~ \noindent{\bf{Else}}
		\STATE \!\!\!\!\!~~~~~~~ $\lambda_\text{min}=\lambda$.
		\STATE \!\!\!\!\!~~~ \noindent{\bf{End}}
        \STATE \!\!\!\!\!\noindent{\bf{e)}}~ {\bf If} set $\Pi$ contains more than one candidates,  
		\STATE \!\!\!\!\!~~~~~~~ Back to {\bf b)}.
		\STATE \!\!\!\!\!~~~ \noindent{\bf{Else}} 
		\STATE \!\!\!\!\!~~~~~~~ Determine optimal source scheduling as the only one in $\Pi$.
        \STATE \!\!\!\!\!~~~~~~~ Determine  optimal blocklengths based on Section III-C.
		\STATE \!\!\!\!\!~~~ \noindent{\bf{End}}
	\end{algorithmic}
	\label{al:optimal}
	%\vspace{-.1cm}
	}
\end{algorithm}

\section{Numerical Results}

In this section, we validate the proposed joint optimization framework %for joint source node activation, assignment, and update blocklength design, 
and evaluate the obtained optimal solution. %validate the analytical results developed in the previous sections.
%\subsection{Simulation Setup}
We consider a square deployment area of side length~$100$~m, in which~$S\!=\!7$ source nodes and~$K\!=\!4$ receivers are placed randomly. %uniformly at random.  
The carrier frequency~$f_c$ is set to~$2.4$~GHz and the path loss exponent is $2.3$ for large-scale channel gain modeling. 
Furthermore, we set by default  $P\!=\!10$\,dBm, $\sigma^2\!=\!-60$\,dBm, $N\!=\!10$, $D\!=\!256$ bits, $T_s\!=\!4\, \mu$s, and normalized fluid antenna length $W=1$. 
The average error probability $\mu_k(\mathcal A_U,\alpha(\cdot),m_{\alpha(k)})$ in \eqref{eq:error} is calculated through $1000$ independent channel realizations.

%\subsection{Validation of the Fixpoint-Based Achievability Inference}
\begin{figure}[t]
    \centering
\includegraphics[width=0.74\linewidth]{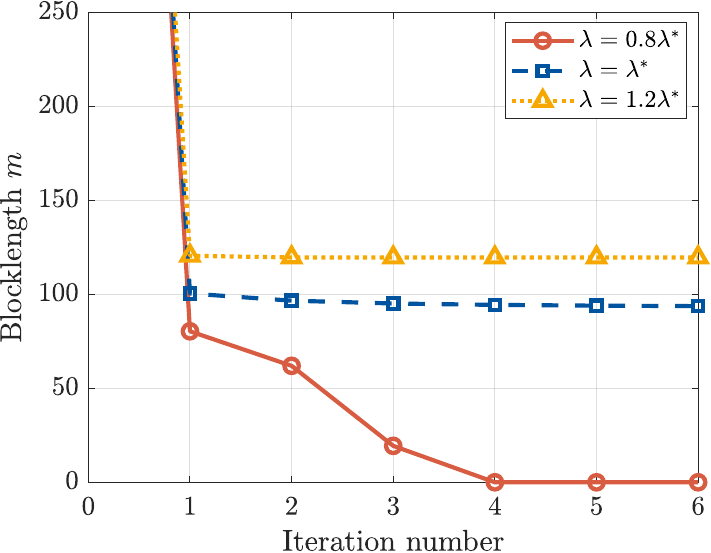}
    \vspace{-.2cm}
    \caption{Fixpoint convergence with different thesholds $\lambda$.}
    \label{fig: fixpoint}
    \vspace{-0.5cm}
\end{figure}

We first validate the fixpoint-based performance achievability inference for optimal blocklength designs presented in Section~III. We study one representative source node activation and assignment pair $(\mathcal A_U,\alpha(\cdot))$, % (for example the optimal one), 
and denote its optimal by $\lambda^*$. Then, we apply the top iteration of mapping $\mathbf f_\lambda$ in~\eqref{eq: fixpoint-mapping} for three test thresholds $\lambda\in\{0.8\lambda^*,\lambda^*,1.2\lambda^*\}$. 
Fig.~\ref{fig: fixpoint} presents the convergence behaviour of mapped elements. As indicated by Lemma~\ref{le:fixpoint}, the updated blocklengths are monotonically decreasing and converge to the greatest fixpoint of $\mathbf f_\lambda$ within very few iterations. 
For $\lambda=0.8\lambda^*<\lambda^*$, the blocklength collapse towards zero in four iterations, which by Corollary~\ref{le:filter} confirms the threshold is unachievable. %The case $\lambda=\lambda^*$ converges precisely to the optimal blocklength $\mathbf m^*$. 
For both achievable thresholds $\lambda=\lambda^*$ and $\lambda=1.2\lambda^*>\lambda^*$, the blocklengths converge to strictly positive values, which aligns with our characterizations on the performance achievability. %blocklength is achieved, since~$\lambda$ is achievable. 

%\subsection{Validation of the Filtering Algorithm}

\begin{figure}[t]
    \centering
    \includegraphics[width=0.74\linewidth]{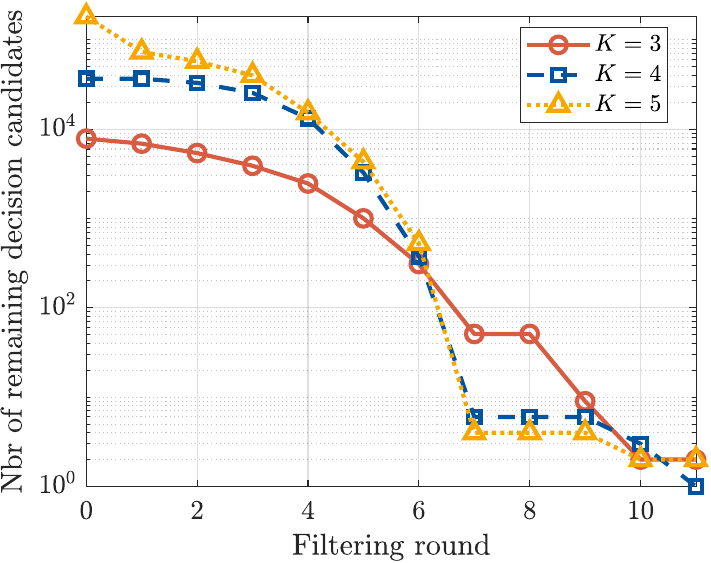}
    \vspace{-.3cm}
    \caption{Number of remaining decision candidates over filtering rounds.}
    \label{fig: filtering}\vspace{-0.1cm}
\end{figure}

Next, we validate the filtering strategy proposed in Section~IV by tracking the number of remaining decision candidates in $\Pi$ across filtering rounds. 
The number of remaining decision candidates %as a function of the filtering round 
is shown in Fig.~\ref{fig: filtering} with different user number $K$.
As indicated, more users~$K$ produce a larger initial candidate set. For all evaluated cases, we observe that the number of remaining decision pairs drops under 100 within seven iterations, which validates the filtering efficiency. %The preceding reduction is also clearly visible on the logarithmic scale. 
The algorithm converges when only the optimal decision is left in the set.
Note that for $K=3,5$, two decision candidates are kept finally, as their optimal objective values are sufficiently close such that both decisions can be considered as the optimal under our preset relative tolerance, i.e., $0.1\%$. % Specifically, the difference in the achievable objective is below $0.1\%$, therefore numerically indistinguishable.

%\subsection{System Evaluation}

\begin{figure}[t]
    \centering
    \includegraphics[width=0.74\linewidth]{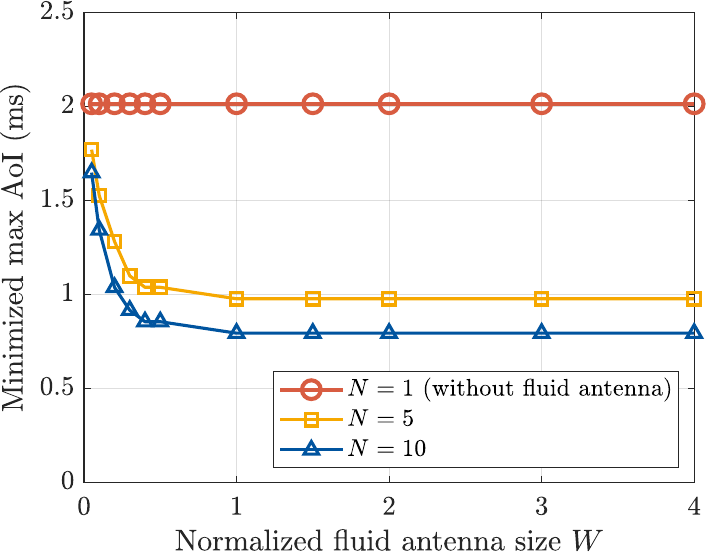}
    \vspace{-.2cm}
    \caption{Optimized AoI performance with different antenna sizes $W$ and fluid port numbers~$N$.}
    \label{fig: fig_wrt_W_N}
    \vspace{-0.5cm}
\end{figure}

Afterwards, we study the impact of the fluid-antenna parameters on the optimized average AoI. Fig.~\ref{fig: fig_wrt_W_N} shows the impact of the fluid antenna size $W$ on the minimized average AoI for different fluid port number~$N$, where $N=1$ corresponds to the user configuration without fluid antennas. %receiver and serves as a reference.
For case~$N=1$, the curve is flat in~$W$, since a single port does not benefit from spatial reconfiguration. As observed, the introduction of fluid antennas (i.e., $N>1$) has substantially improved the AoI performance, which validates the research necessity of this work. 
For cases with fluid antennas, the AoI decreases sharply as $W$ grows, reflecting the fact that the fluid ports are more decorrelated as the antenna aperture widens, so that the port-selection diversity in~\eqref{eq:sinr} provides substantial SINR gains. When $W$ is sufficiently large, % Beyond $W\!\approx\!1$~$\lambda$, 
the curves flatten, indicating that the effect of wider fluid port is essentially saturated.

\begin{figure}[t]
    \centering
\includegraphics[width=0.74\linewidth]{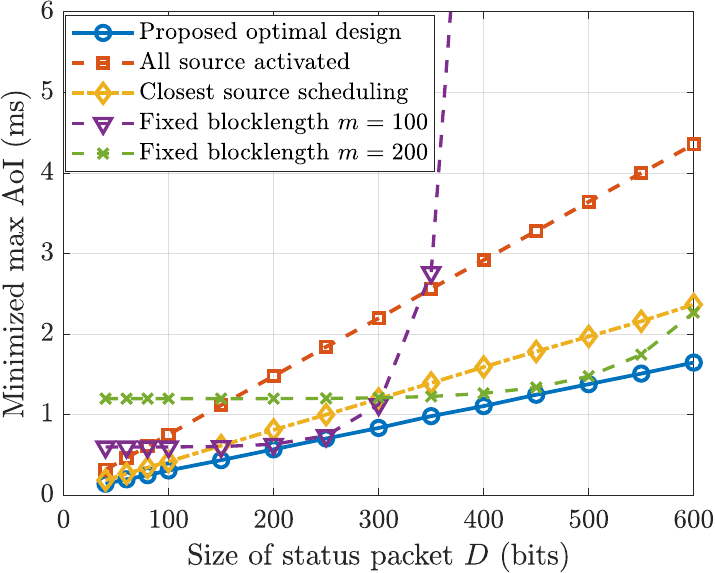}
\vspace{-.2cm}
    \caption{AoI performance comparison with benchmarks over different  $D$.}
    \label{fig: fig_wrt_D}
    \vspace{-.5cm}
\end{figure}

Finally, we evaluate the impact of the status packet payload~$D$ on the average AoI and benchmark the proposed scheme against the following baselines:
\begin{itemize}
    \item \emph{All source activated}: all available source node are activated, i.e., $\mathcal A_U=\mathcal A_S$, while source selection $\alpha(\cdot)$ and blocklength $m_{\alpha(k)}$ are still jointly optimized.
    \item \emph{Closest source scheduling}: each user~$k$ selects and activates only its geographically nearest source, while the blocklength $m_{\alpha(k)}$ is optimized. %ally designed.
    \item \emph{Fixed-blocklength}: the blocklength is fixed, % at some values 
    while $\mathcal A_S$ and $\alpha(\cdot)$ are optimized.
\end{itemize}
The achieved average AoI from different approaches is depicted in Fig.~\ref{fig: fig_wrt_D}. Our proposed design consistently achieves the lowest AoI across the entire range, which validates its optimality, and the relative gains over the baselines become more pronounced as $D$ grows. The \emph{all source activated} baseline incurs the strongest interference in SINR and hence the steepest AoI growth with $D$.
The \emph{closest source scheduling} is relatively close to the optimal selection, and could be a good heuristics. However, in cases where \emph{closest-source} would let each user activate a distinct source, the strong interference could deteriorate strongly the performance. The baseline \emph{fixed-blocklength} coincides with the optimum only in a small region (where the blocklength is close to the optimal blocklength), but performs poorly when other regions of~$D$.  
Overall, jointly optimizing $\mathcal A_U$, $\alpha(\cdot)$, and $m_{\alpha(k)}$ is essential when the payload size scales: ignoring any one of the three optimization variables yields much worse AoI.

\section{Conclusion}
In this paper, we considered a status update system with multiple source nodes observing the same status and multiple users equipped with fluid antennas. By applying adaptive fluid port selection at all users for simultaneous channel gain enhancement and interference mitigation, we formulated a maximum average AoI minimization problem and jointly optimized source scheduling, including source node activation and assignment, together with update scheduling at the activated source nodes. For a given source scheduling decision, we characterized the performance achievability of update scheduling based on fixed-point theory, which further enabled an efficient bisection algorithm for optimal update scheduling design. Subsequently, we established a criterion for achievable performance comparison over any set of source scheduling candidates, allowing efficient elimination of non-optimal source scheduling decisions. As a result, the globally optimal joint source and update scheduling solution was obtained. Simulation results validated the global optimality of the proposed solution and demonstrated the performance benefits of fluid antennas for AoI minimization. As future work, we will extend the proposed optimization framework to more complex multi-source scenarios involving multiple monitored statuses.

%\vspace*{-0.2cm}
%\bibliographystyle{IEEEtran}

\end{document}